\newtheorem{theorem}{Theorem}
\newtheorem{lemma}{Lemma}
\newtheorem{definition}{Definition}
\begin{document}
\baselineskip=22pt
\title{High Speed Railway Wireless Communications: Efficiency v.s. Fairness \footnote{ \hrule\quad\\ This paper has been published by \textit{IEEE Trans. on Veh. Technol.} in Feb. 2014.} }

\author{Yunquan~Dong,
        Pingyi~Fan,~\IEEEmembership{Senior Member,~IEEE} ~and~Khaled~Ben~Letaief$^{\dag}$,~\IEEEmembership{Fellow,~IEEE},\\
        \{dongyq08@mails,~fpy@mail\}.tsinghua.edu.cn, ~$^{\dag}$eekhaled@ece.ust.hk }

\maketitle

\begin{abstract}

High speed railways (HSRs) have been deployed widely all over the world in recent years. Different from traditional cellular communication, its high mobility makes it essential to implement power allocation along the time. In the HSR case, the transmission rate depends greatly on the distance between the base station (BS) and the train. As a result, the train receives a time varying data rate service when passing by a BS. It is clear that the most efficient power allocation will spend all the power when the train is nearest from the BS, which will cause great unfairness along the time. On the other hand, the channel inversion allocation achieves the best fairness in terms of constant rate transmission. However, its power efficiency is much lower. Therefore, the power efficiency and the fairness along time are two incompatible objects. For the HSR cellular system considered in this paper, a trade-off between the two is achieved by proposing a temporal proportional fair power allocation scheme. Besides, near optimal closed form solution and one algorithm finding the $\epsilon$-optimal allocation are presented.

\end{abstract}

\begin{keywords}
high speed railway communication, power allocation, channel service, proportional fairness along time.
\end{keywords}

\let\thefootnote\relax\footnotetext{Copyright (c) 2013 IEEE. Personal use of this material is permitted. However, permission to use this material for any other purposes must be obtained from the IEEE by sending a request to pubs-permissions@ieee.org. }

\section{Introduction}

In recent years, high speed trains with an operation speed of more than 300 km/h are being deployed rapidly all over the world. Particularly, bullet trains and high speed trains have been widely used in China. Besides the high mobility of the train, the growing demands on high speed data services make it crucial to investigate the specific channel and develop efficient transmission schemes for high speed railway (HSR) communications. Among the research on HSR, the authors \cite{Zhao-2013} proposed a three-dimensional model of line-of-sight multi-input-mulit-output (LOS MIMO)  channel for HSR viaduct environment and analyzed its MIMO channel capacity. New channel estimation technique was proposed for long-term evolution (LTE) systems in the HSR environment in \cite{Qiu-2012}.
Moreover, formalization of a complete dynamic model that represents the dynamic coupling of electrical and mechanical phenomena dedicated to the HSR systems was presented in \cite{Colombaioni-2008}. New system architectures based on radio over fiber was also proposed in \cite{Zhou-2011}, which increases throughput and decreases handovers. However, there are some inevitable increases in complexity.

Particularly, there come out two basic problems due to the high speed of trains in HSR cellular systems. Firstly, since base stations (BSs) of finite coverage are positioned along the railway, the time that a certain BS can serve the train is limited. As a result, besides the data rate,
we must take in to account of the length of the serving period of the a in HSR systems. This makes the concept of channel service \cite{DWF-Deterministic-TWC} a good characterization in HSR communications. Secondly, the transmission rate is highly determined by the distance between the BS and the train, which varies quickly as the train moves. However, the train moves at a constant velocity in most cases. Thus, the position as well as the transmission rate at the next moment can be predicted. This unique feature of HSR communication makes it necessary and feasible to implement power allocation along the time at the BS.


Define channel service as the total amount of service provided by the channel in a period of $t$ \cite{DWF-Deterministic-TWC}, where $t$ is the duration in which the train is moving in the coverage of the BS.  Then $S(t)$ is equal to the sum of the service provided at every epochs and can be expressed as the integral of the instantaneous capacity $C(\tau)$, i.e., $S(t)=\int_0^t C(\tau) \mathrm{d}\tau$. Particularly, it has been applied to the base station arrangement in HSR cellular systems in \cite{Chuang-bs_arrange}, where the authors considered the most efficient range of BS's service in the BS deployment.

On one hand, for some given average power, larger $S(t)$ means higher energy efficiency and better user experience. On the other hand, the signal pathloss is increasing with the distance between the BS and the train. If the protocol is designed simply to maximize $S(t)$, the power consumption will mainly concentrate on the period when the train is nearest from the BS, which will cause serious unfairness along the time.

A possible trade-off between channel service and fairness along time is the proportional fairness (PF), which stems from game theory and gets wide applications in the Qualcomm’s HDR (high data rate) systems \cite{Kelly-1998,Kim-2005}. In this paper, a proportional fair power allocation along time is proposed for HSR communication systems. A near optimal power allocation in closed expression is given. An algorithm that will find the $\epsilon$-optimal solution within any given expected error will also be presented.

The rest of this paper is organized as follows. The system model and some simple power allocation schemes are presented in Section \ref{sec:2}. The definitions of proportional fairness and channel service are also reviewed in this section. Next, the proportional fairness along time is proposed for HSR systems in Section \ref{sec:3}, where a near optimal PF power allocation as well as an algorithm to find the $\epsilon$-optimal PF power allocation is given. The obtained results will be presented via numerical results in Section \ref{sec:4}.  Finally, conclusions on the work are given in section \ref{sec:5}.

\section{System Model}\label{sec:2}
Consider a single antenna communication system for the HSR as shown in Fig \ref{fig:tx_m}, where an elevated antenna on the top of the train serves as the access point for users in the train. Base stations are positioned along the railway, spacing each other by $2R_0$, where $R_0$ is the cellular radius. Suppose the train moves at velocity $v$. Then it takes the train $T=\frac{R_0}{v}$ to pass through the cell from the cell center. For the coordinate system in Fig. \ref{fig:tx_m}, let $d(\tau)$ be the distance between BS and the train at time $\tau$ for $0\leq\tau\leq T$. Assuming that the minimum distance between BSs and the railway is $d_0$, we have $d(\tau)=\sqrt{d_0^2+v^2\tau^2}$. The wireless channel between BS and the train is assumed to be an additive white Gaussian noise channel (AWGN) with LOS pathloss for the following reasons. Firstly, viaducts account for the vast majority of Chinese HSR (more specifically, 86.5$\%$, elevated in Beijing-Shanghai HSR) \cite{Zhao-2013}.
Due to lack of scatters, the received signal at the train is not rich in independent signal paths, among which the LOS path persists the most power. Therefore, the AWGN or sometimes the Rician channel model is suitable. Secondly, it has been proved in \cite{DWF-Deterministic-TWC} that for independent identical distributed fading channels, the service provided by the channel is a deterministic time-linear function, just like the AWGN channel. It is also assumed that the frequency offset estimation and correction are perfect \cite{YF-DFO}, which makes it easy for us to focus on the power allocation problem. At the same time, the power allocation problem in fading HSR systems is also very important and will be considered in future work. Let $\bar{P}$ be the average transmit power of BS, $W$ be the limited signal bandwidth, $N_0$ be the noise power spectral density and $\alpha$ be the pathloss exponent. Suppose the instantaneous transmit power is  $P(\tau)$ according to some power allocation scheme. Besides, every power allocation policy  $P(\tau)$ satisfies $\frac{1}{T}\int_0^T P(\tau)\mathrm{d}\tau=\bar{P}$. Denote $N(\tau)=WN_0d^\alpha(\tau)$, the instantaneous capacity of the channel between BS and the train at time $\tau$ is
\begin{equation}\label{eq:C_tau}\nonumber
    C^{(P)}(\tau)=W\log\left(1+\frac{P(\tau)}{N(\tau)}\right).
\end{equation}
\begin{figure}[!t]
\centering
\includegraphics[width=2.6in]{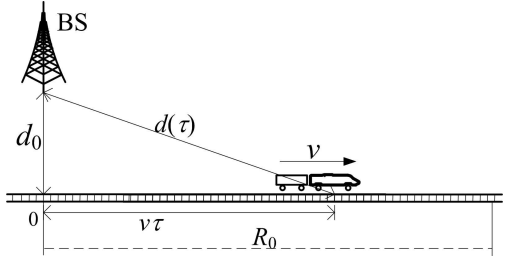}
\caption{The HSR communication system} \label{fig:tx_m}
\end{figure}

\subsection{Constant and channel inversion power allocation}
The most straightforward scheme is the constant power allocation, in which BS maintains a constant transmit power all the time, i.e., $P(\tau)=\bar{P}$. Thus,
\begin{equation}\label{eq:Ctau_cons}
    C^{(P_{con})}(\tau)=W\log\left(1+\frac{\bar{P}}{N(\tau)}\right).
\end{equation}

Since this scheme ignores the variation of instantaneousness channel gain, transmission rate becomes very low when the train is far from the BS, resulting in great unfairness.

Besides, the channel inversion power allocation tries to maintain a constant transmission rate at the transmitter, at all times. Therefore, the ratio between $P(\tau)$ and $N(\tau)$ is a constant. Suppose $P(\tau)=k_0N(\tau)$, by solving $\int_0^T P(\tau) \mathrm{d}\tau = \bar{P}T$
we have $k_0=\frac{\bar{P}T}{\int_0^T N(\tau) \mathrm{d}\tau} = \frac{\bar{P}}{\bar{N}}$ and 

\begin{equation}\label{eq:Ctau_inv}
    C_{inv}(\tau)=W\log(1+k_0).
\end{equation}

In this case, the power efficiency is very low because too much of the power is used to compensate those very bad channel states. However, the best fairness in terms of stable transmission rate is achieved, at the cost of power efficiency.

\subsection{Water filling power allocation}
On the other hand, the maximum channel service can be achieved by water filling scheme. Specifically, it is the solution of the following optimization problem.
\begin{equation}\label{Prob:11}
    \begin{aligned}
        &\underset{P(\tau)}{\max} & & \int_0^T C^{(P)}(\tau) \mathrm{d}\tau \\
        & s.t.                    & & \frac{1}{T}\int_0^T P(\tau) \mathrm{d}\tau=\bar{P},~ P(\tau)\geq0\\
    \end{aligned}
\end{equation}

Let's consider the Lagrangian
    \begin{equation}\label{dr:LagF_0}\nonumber
    \begin{split}
        F_w=&\int_0^T  C^{(P)}(\tau) \mathrm{d}\tau -\lambda\left[\int_0^T P(\tau) \mathrm{d}\tau-\bar{P}T\right]\\
        =&\int_0^T \left[ C^{(P)}(\tau) -\lambda P(\tau) - \lambda\bar{P}\right]\mathrm{d}\tau \triangleq \int_0^T L_w(\tau) \mathrm{d}\tau
    \end{split}
    \end{equation}
where $L_w(\tau)=C^{(P)}(\tau) -\lambda P(\tau) - \lambda\bar{P}$.
According to Euler's Formula, taking the derivative of $L_w(\tau)$ is equivalent to taking the derivative of its integrand, namely
    \begin{equation}\nonumber
            \frac{\partial L_w(\tau)}{\partial P(\tau)}=\frac{W}{P(\tau)+N(\tau)}-\lambda.
    \end{equation}

Set $\frac{\partial L_w}{\partial P(\tau)}=0$, we get
\begin{equation}\label{eq:Ctau_wf}
    P(\tau)=\frac{W}{\lambda}-N(\tau).
\end{equation}

Since $C(\tau)$ decreases monotonically as the train leaves BS, it can be seen that the optimal solution lies on edge of the feasible allocation region, i.e., there is sometime when the allocated power is zero. Suppose $P(\tau)=0$ when $\tau>t_1$ for some $t_1<T$, one has $\int_0^{t_1} P(\tau) \mathrm{d}\tau=\bar{P}T$ and $P(t_1)=0$.
Then $P(\tau)$ can be obtained after solving $t_1$ and $\lambda$.

\subsection{Proportional fairness and channel service}
The proportional fairness power allocation was firstly proposed to balance the average throughput and user fairness in the HDR systems. From the context of the game theory, the definition of the fairness among users is given as follows.
\begin{definition} \cite{Kelly-1998} {Proportional Fairness}\\
     A scheduling $\mathcal{P}$ is said to be proportionally fair if and only if, for any feasible scheduling $\mathcal{S}$, we have:
    \begin{equation}\label{df:pf_org}
        \sum_{i\in U}\frac{R_i^{(\mathcal{S})}-R_i^{(\mathcal{P})}}{R_i^{(\mathcal{P})}}\leq0,
    \end{equation}
    where $U$ is the user set, and $R_i^{(\mathcal{S})}$, is the average rate of user
    $i$ by scheduler $\mathcal{S}$.
\end{definition}

Although clear in physical meaning, this definition is difficult to use when solving the proportional fair scheduling policy. Then the following lemma give us a more mathematically trackable form of proportional fairness.
\begin{lemma}\cite{Kim-2005}
    A scheduling is proportional fair if and only if it is the solution to the following optimization problem

    \begin{equation}\label{prob:1}
        \begin{aligned}
            & \underset{\mathcal{P}}{\max} & & \sum_{i\in U} \ln R_i^{(\mathcal{P})} \\
            & s.t.                         & & \mathcal{P} \text{ is feasible.} \\
        \end{aligned}
    \end{equation}

\end{lemma}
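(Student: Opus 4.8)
The plan is to establish the equivalence by recognizing that the proportional fairness inequality in \eqref{df:pf_org} is precisely the first-order optimality condition for the concave program \eqref{prob:1}. The entire argument rests on two structural facts: the objective $\sum_{i\in U}\ln R_i^{(\mathcal{P})}$ is a strictly concave function of the rate vector $(R_i^{(\mathcal{P})})_{i\in U}$, and the set of achievable rate vectors is convex (the latter being guaranteed by time-sharing among feasible schedulings). I would prove the two implications separately.

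For the direction that a maximizer of \eqref{prob:1} is proportionally fair, I would take $\mathcal{P}$ to be optimal and let $\mathcal{S}$ be any other feasible scheduling. By convexity of the rate region, the rate vector with components $(1-\theta)R_i^{(\mathcal{P})}+\theta R_i^{(\mathcal{S})}$ is achievable for every $\theta\in[0,1]$. Defining $g(\theta)=\sum_{i\in U}\ln\!\bigl((1-\theta)R_i^{(\mathcal{P})}+\theta R_i^{(\mathcal{S})}\bigr)$, optimality of $\mathcal{P}$ forces $g$ to attain its maximum at $\theta=0$, so the one-sided derivative satisfies $g'(0)\le 0$. Differentiating and evaluating at $\theta=0$ yields exactly $\sum_{i\in U}\frac{R_i^{(\mathcal{S})}-R_i^{(\mathcal{P})}}{R_i^{(\mathcal{P})}}\le 0$, which is \eqref{df:pf_org}.

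For the converse, I would invoke the tangent-line bound for the concave logarithm, namely $\ln x\le \ln a+\frac{1}{a}(x-a)$ for all $x,a>0$. Applying it with $x=R_i^{(\mathcal{S})}$ and $a=R_i^{(\mathcal{P})}$ and summing over $i\in U$ gives
\[
\sum_{i\in U}\ln R_i^{(\mathcal{S})}\le \sum_{i\in U}\ln R_i^{(\mathcal{P})}+\sum_{i\in U}\frac{R_i^{(\mathcal{S})}-R_i^{(\mathcal{P})}}{R_i^{(\mathcal{P})}}.
\]
If $\mathcal{P}$ is proportionally fair, the last sum is nonpositive by \eqref{df:pf_org}, hence $\sum_{i\in U}\ln R_i^{(\mathcal{S})}\le \sum_{i\in U}\ln R_i^{(\mathcal{P})}$ for every feasible $\mathcal{S}$; that is, $\mathcal{P}$ solves \eqref{prob:1}.

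I expect the main obstacle to be the forward direction, specifically justifying the passage from the variational inequality $g'(0)\le 0$ to the PF condition, which quietly requires both the convexity of the achievable rate region and the differentiability of the objective at the optimum so that the one-sided directional derivative is well defined and equals the stated sum. The converse is essentially immediate once the supporting tangent-line bound for $\ln$ is in hand.
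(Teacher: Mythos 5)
Your proposal is correct and follows essentially the same route as the paper's own argument (given in the Appendix for the time-domain analogue, Lemma~\ref{lem:2_pfhsr}): the forward direction is the first-order optimality condition at the maximizer of the concave objective, and the converse is the concavity (tangent-line/Taylor) bound $\ln y \le \ln x + \frac{y-x}{x}$ combined with the PF inequality. If anything, your treatment of the forward direction is the more careful one, since the explicit time-sharing/convexity argument with $g(\theta)$ and the one-sided derivative $g'(0)\le 0$ makes rigorous what the paper handles informally by considering only a ``slight'' deviation $y_\tau = x_\tau + \Delta x_\tau$.
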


The \textit{channel service} was proposed in \cite{DWF-Deterministic-TWC} to characterize the transmission process over fading channels from a viewpoint of cross-layer designing. Seen from the receiver, it is equal to the amount of data that the receiver can obtain through the channel during a period of $t$, assuming the transmitter has plenty of data at each epoch.
\begin{definition} \cite{DWF-Deterministic-TWC} {Channel Service}\\
    The channel service process $S(t)$ is defined as the amount of service provided by the channel during a period of $t$. For {i.i.d.} (fading) channels, it can be expressed as the integral of the instantaneous channel capacity over the period,
    \begin{equation}\nonumber
        S(t)=\int_0^t C(\tau) \mathrm{d}\tau.
    \end{equation}
\end{definition}

It should be noted that the equation holds in sense of mean square for fading channels.

Since channel service considers both data rate and service time, it is specially applicable to HSR systems. Firstly, for high speed trains, the time that it can be served by a certain BS is limited and highly related to its velocity. Secondly, the date rate that it is served by the channel is closely related to the distance to the BS. Therefore, we have to take two parameters, i.e., rate and time, into consideration at the same time. To this end, channel service is a reasonable choice.

\section{Proportional Fair Power Allocation for HSR}\label{sec:3}
In this paper, proportional fairness will be applied to HSR systems. The difference lies in that we consider the fairness for a fixed user in the time domain, other than the fairness among different users.

\subsection{Proportional fairness along time}
The transmission rate between BS and the train dependents greatly on the distance between the two. However, users in the train are expecting a stable data transmission service as the train moves, i.e., the fairness of service along time. On the other hand, the transmission rate still is an important parameter of concern. Therefore, the contradiction between channel service and fairness along time exists naturally. By applying  proportional fairness to HSR systems,  the proportional fairness along time is defined as follows.
\begin{definition}
    A power allocation $P(\tau)$ is said to be proportionally fair along time if and only if, for any feasible power allocation $S(\tau)$, we have:
    \begin{equation}\label{df:pf_hsr}\nonumber
        \int_0^T\frac{C^{(S)}(\tau)-C^{(P)}(\tau)}{C^{(P)}(\tau)} \mathrm{d}\tau\leq0,
    \end{equation}
    where $C^{(S)}(\tau)$, is the transmission rate at the time $\tau$ by power allocation $S(\tau)$.
\end{definition}

Next, the proportional fair power allocation can be found according to the following Lemma.

\begin{lemma}\label{lem:2_pfhsr}
    A power allocation is proportional fair along time if and only if it is the solution to the following optimization problem
    \begin{equation}\label{Prob:2}
        \begin{aligned}
            &\underset{P(\tau)}{\max} & & \int_0^T \ln C^{(P)}(\tau) \mathrm{d}\tau \\
            & s.t.                    & & \frac{1}{T}\int_0^T P(\tau) \mathrm{d}\tau=\bar{P} \\
        \end{aligned}
    \end{equation}
\end{lemma}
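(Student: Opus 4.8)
The plan is to prove the biconditional by establishing the two implications separately, in each case exploiting a tangent-line (concavity) inequality so that the pointwise structure of the rate function $C^{(P)}(\tau)=W\log(1+P(\tau)/N(\tau))$ does the work. Throughout I treat $P(\tau)$ as the decision variable, note that any optimal (or proportionally fair) allocation must satisfy $C^{(P)}(\tau)>0$ almost everywhere — otherwise $\ln C^{(P)}(\tau)$ and the denominator in the fairness definition would be ill-posed — and use that the feasible set $\{P:\tfrac1T\int_0^T P\,\mathrm{d}\tau=\bar P,\ P\ge 0\}$ is convex, so that convex combinations of feasible allocations remain feasible.

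First I would show that proportional fairness along time implies optimality in (\ref{Prob:2}). Since $\ln$ is concave, the tangent-line bound $\ln a-\ln b\le (a-b)/b$ holds for all $a,b>0$. Setting $a=C^{(S)}(\tau)$ and $b=C^{(P)}(\tau)$ and integrating over $[0,T]$ gives
\begin{equation}\nonumber
\int_0^T\ln C^{(S)}(\tau)\,\mathrm{d}\tau-\int_0^T\ln C^{(P)}(\tau)\,\mathrm{d}\tau\le\int_0^T\frac{C^{(S)}(\tau)-C^{(P)}(\tau)}{C^{(P)}(\tau)}\,\mathrm{d}\tau.
\end{equation}
If $P$ is proportionally fair along time, the right-hand side is $\le 0$ for every feasible $S$, whence $\int_0^T\ln C^{(S)}\,\mathrm{d}\tau\le\int_0^T\ln C^{(P)}\,\mathrm{d}\tau$; that is, $P$ attains the maximum in (\ref{Prob:2}).

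For the converse I would use a first-order variational argument. Let $P^\ast$ solve (\ref{Prob:2}) and let $S$ be any feasible allocation. The convex combination $P_\theta=(1-\theta)P^\ast+\theta S$ is feasible for $\theta\in[0,1]$, so the objective $J(P_\theta)=\int_0^T\ln C^{(P_\theta)}\,\mathrm{d}\tau$ is maximized at $\theta=0$, giving the one-sided derivative condition
\begin{equation}\nonumber
\left.\frac{\mathrm{d}}{\mathrm{d}\theta}J(P_\theta)\right|_{\theta=0^+}=\int_0^T\frac{1}{C^{(P^\ast)}(\tau)}\cdot\frac{W\left(S(\tau)-P^\ast(\tau)\right)}{N(\tau)+P^\ast(\tau)}\,\mathrm{d}\tau\le 0.
\end{equation}
To convert this into the fairness inequality I would invoke the concavity of $C^{(P)}(\tau)$ in $P(\tau)$, namely the tangent bound $C^{(S)}(\tau)-C^{(P^\ast)}(\tau)\le \tfrac{W(S(\tau)-P^\ast(\tau))}{N(\tau)+P^\ast(\tau)}$. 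Dividing by $C^{(P^\ast)}(\tau)>0$ and integrating then yields $\int_0^T\tfrac{C^{(S)}-C^{(P^\ast)}}{C^{(P^\ast)}}\,\mathrm{d}\tau\le 0$, i.e. $P^\ast$ is proportionally fair along time.

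The main obstacle is precisely the mismatch between the quantity in the fairness definition, which involves the \emph{actual} rate gap $C^{(S)}-C^{(P)}$, and the quantity that first-order optimality naturally produces, which is the \emph{linearized} gap $\tfrac{W(S-P^\ast)}{N+P^\ast}$. The two concavity inequalities — of $\ln(\cdot)$ in one direction and of $C^{(P)}$ in $P$ in the other — are exactly what bridges this gap. I would finally take care to justify interchanging differentiation and integration (via Euler's formula, as already used in the water-filling derivation) and to record that $C^{(P)}>0$ keeps every logarithm and quotient well defined.
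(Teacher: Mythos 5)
Your proof is correct, and it diverges from the paper's in a meaningful way on one of the two directions. For fairness $\Rightarrow$ optimality, you and the paper do essentially the same thing: the paper expands $\ln y_\tau$ in a Taylor series around $x_\tau=C^{(P)}(\tau)$ and uses concavity to drop the higher-order terms, which is just a clunkier route to your global tangent-line bound $\ln a-\ln b\le (a-b)/b$; your version is cleaner and does not pretend the competitor is close to the optimum. For optimality $\Rightarrow$ fairness, however, the paper stays entirely in rate space: it perturbs the rate profile as $y_\tau=x_\tau+\Delta x_\tau$ with $\Delta x_\tau$ "very slight" and reads off $0\ge\int_0^T(\ln x_\tau)'(y_\tau-x_\tau)\,\mathrm{d}\tau=\int_0^T\frac{y_\tau-x_\tau}{x_\tau}\,\mathrm{d}\tau$, implicitly assuming that nearby rate profiles are feasible and that local deviations suffice against arbitrary competitors. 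You instead perturb in power space along the segment $P_\theta=(1-\theta)P^\ast+\theta S$, where feasibility is guaranteed because the power constraint set really is convex (the set of achievable rate profiles is not obviously so), and you then bridge the resulting linearized gap $\frac{W(S-P^\ast)}{N+P^\ast}$ to the actual rate gap $C^{(S)}-C^{(P^\ast)}$ via concavity of $C^{(P)}$ in $P$ — a step the paper never makes explicit because its rate-space formulation hides it. Your route costs one extra ingredient (the second tangent-line bound) but is the more rigorous of the two on this direction; the paper's buys brevity and a transparent parallel with the multi-user proportional-fairness argument of the Kim--Han reference it cites, at the price of glossing over exactly the feasibility and linearization issues you identified as the main obstacle.
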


\begin{proof}See Appendix.
\end{proof}

In this formulation, an attractive trade-off is achieved between \textit{channel service} and \textit{fairness along time.} For the HSR system considered in this paper, the proportional fair power allocation is given in the following subsection.

\subsection{The near optimal solution}
Actually, transcendental functions can not be avoided in solving problem (\ref{Prob:2}). Thus, it is quite difficult to obtain the explicit solution, for which an approximation is given as follows.
\begin{theorem}\label{th:solut_app}
    An approximation solution to the proportional fair power allocation for the HSR system is given by
    \begin{equation}\label{rt:th_1}
        P(\tau)=\frac{ [\bar{P}+N(T)]\ln\left(1+\frac{\bar{P}}{N(T)}\right) }{ \textsf{W}\left[\frac{(\bar{P}+N(T))}{N(\tau)}\ln\left(1+\frac{\bar{P}}{N(T)}\right)\right] },
    \end{equation}
    where $N(\tau)=WN_0d^\alpha(\tau)$, $\textsf{W}(z)$ is the solution to $\textsf{W}(z)e^{\textsf{W}(z)}=z$, i.e., the LambertW function \cite{Lambert_w}.
\end{theorem}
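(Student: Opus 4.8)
The plan is to treat (\ref{Prob:2}) as a convex variational problem, solve its pointwise stationarity condition, invoke the Lambert W function, and then fix the Lagrange multiplier by a boundary-matching approximation. First I would note that the problem is genuinely convex: since $C^{(P)}(\tau)=W\ln(1+P(\tau)/N(\tau))$ is concave in $P(\tau)$ and $\ln(\cdot)$ is concave and increasing, the composition $\ln C^{(P)}(\tau)$ is concave in $P(\tau)$, the objective is concave, and the power constraint is linear. Hence any stationary point of the Lagrangian is the global optimum. Mirroring the water-filling derivation, I would form
\begin{equation}\nonumber
F=\int_0^T\left[\ln C^{(P)}(\tau)-\lambda P(\tau)+\lambda\bar{P}\right]\mathrm{d}\tau,
\end{equation}
differentiate the integrand pointwise by Euler's formula, and use $\partial C^{(P)}(\tau)/\partial P(\tau)=W/(P(\tau)+N(\tau))$. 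Setting the derivative to zero reduces the stationarity condition to
\begin{equation}\nonumber
\left[P(\tau)+N(\tau)\right]\ln\!\left(1+\frac{P(\tau)}{N(\tau)}\right)=\frac{1}{\lambda}.
\end{equation}

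The next step is to recognize this transcendental equation as solvable through the Lambert W function. Writing $z=P(\tau)+N(\tau)$ so that $1+P(\tau)/N(\tau)=z/N(\tau)$, the condition becomes $z\ln(z/N(\tau))=1/\lambda$; the substitution $s=\ln(z/N(\tau))$ turns it into $s e^{s}=1/(\lambda N(\tau))$, whence $s=\textsf{W}(1/(\lambda N(\tau)))$ and, using the identity $e^{\textsf{W}(x)}=x/\textsf{W}(x)$,
\begin{equation}\nonumber
P(\tau)+N(\tau)=\frac{1/\lambda}{\textsf{W}\!\left(1/(\lambda N(\tau))\right)}.
\end{equation}
This already carries the exact structure of the claimed expression: once $1/\lambda$ is identified with the constant $c=[\bar{P}+N(T)]\ln(1+\bar{P}/N(T))$, the argument $1/(\lambda N(\tau))=\frac{\bar{P}+N(T)}{N(\tau)}\ln(1+\bar{P}/N(T))$ matches the argument of $\textsf{W}$ appearing in (\ref{rt:th_1}), and the numerator matches as well.

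The remaining — and hardest — step is pinning down $\lambda$, which is exactly where the word \emph{approximation} enters. In principle $\lambda$ is fixed by $\frac{1}{T}\int_0^T P(\tau)\,\mathrm{d}\tau=\bar{P}$, but substituting the Lambert-W solution yields $\int_0^T[c/\textsf{W}(c/N(\tau))-N(\tau)]\,\mathrm{d}\tau=\bar{P}T$, an integral of a Lambert-W expression in $\tau$ (through $N(\tau)=WN_0(d_0^2+v^2\tau^2)^{\alpha/2}$) with no elementary antiderivative. I would therefore determine $\lambda$ by a boundary-matching argument: require the allocation to meet the reference constant power at the cell edge $\tau=T$, the fairness-critical point where $N(\tau)$ is largest, i.e. impose $P(T)=\bar{P}$, equivalently $z(T)=\bar{P}+N(T)$. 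Substituting $z(T)=\bar{P}+N(T)$ into $z(T)\ln(z(T)/N(T))=1/\lambda$ immediately gives $1/\lambda=[\bar{P}+N(T)]\ln(1+\bar{P}/N(T))=c$, the stated constant. Finally, invoking the high-power regime $P(\tau)\gg N(\tau)$ (train near the base station), I would drop the additive $N(\tau)$, replacing $z(\tau)=P(\tau)+N(\tau)$ by $P(\tau)$, which delivers the closed form (\ref{rt:th_1}).

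I expect the main obstacle to be making the choice of $\lambda$ principled rather than purely heuristic, since the exact multiplier is defined implicitly through an intractable integral and the clean constant $c$ only emerges after two deliberate approximations — boundary matching at $\tau=T$ and the high-SNR simplification $P+N\approx P$. I would accordingly state explicitly that (\ref{rt:th_1}) is an approximation whose accuracy is tightest when the train is close to the base station and degrades near the cell edge, where $P(\tau)$ is smallest and $N(\tau)$ is no longer negligible. A brief error discussion, or a verification that the approximate allocation still approximately satisfies the average-power constraint, would round out the argument and justify pairing this closed form with the $\epsilon$-optimal algorithm promised elsewhere in the paper.
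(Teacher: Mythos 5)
Your derivation is correct, and its core coincides with the paper's proof: the same Lagrangian, the same pointwise (Euler) stationarity condition $[P(\tau)+N(\tau)]\ln\left(1+\frac{P(\tau)}{N(\tau)}\right)=\frac{1}{\lambda}$, and the same Lambert-W inversion yielding $P(\tau)+N(\tau)=\frac{1/\lambda}{\textsf{W}\left[1/(\lambda N(\tau))\right]}$, i.e.\ the paper's (\ref{rt:Ptau}). The genuine difference is how the multiplier is fixed. The paper substitutes (\ref{rt:Ptau}) into the average-power constraint to get (\ref{dr:P2_const}) and then differentiates both sides with respect to $T$, treating $\lambda$ and $P(\tau)$ as quasi-static in $T$; this produces (\ref{eq:lambda_relex}), which it inverts via $\textsf{W}(z)e^{\textsf{W}(z)}=z$ to reach $1/\lambda_{apx}=[\bar{P}+N(T)]\ln\left(1+\bar{P}/N(T)\right)$. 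Your boundary-matching ansatz $P(T)=\bar{P}$ is mathematically equivalent: since $P(T)+N(T)=\frac{1/\lambda}{\textsf{W}\left[1/(\lambda N(T))\right]}$, the paper's (\ref{eq:lambda_relex}) is precisely the statement $P(T)=\bar{P}$, so both heuristics give the same equation and the same $\lambda_{apx}$. Yours is more transparent and algebraically shorter (you read off $1/\lambda$ directly from the stationarity equation at $\tau=T$, with no Lambert-W inversion needed); the paper's version buys an interpretation of the approximation as a relaxation of the $T$-dependence of the constraint, which is what it appeals to in calling the result ``near optimal.''

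One further point, in your favor but worth flagging: the mismatch you handle explicitly --- the stationary solution (\ref{rt:Ptau}) carries a $-N(\tau)$ term that is absent from the stated formula (\ref{rt:th_1}) --- is passed over silently by the paper, whose proof substitutes $\lambda_{apx}$ into (\ref{rt:Ptau}) and simply asserts (\ref{rt:th_1}). Your high-SNR step ($P(\tau)\gg N(\tau)$) is one honest way to bridge that gap, but note the internal tension: that approximation is worst at the cell edge $\tau=T$, which is exactly where you calibrate $P(T)=\bar{P}$, so your two approximations pull in opposite directions there. If instead one keeps the $-N(\tau)$ term (i.e., reads its absence in (\ref{rt:th_1}) as an omission), your argument and the paper's coincide with no high-SNR assumption at all; stating which reading you adopt would make the proof cleaner than the paper's own.
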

\begin{proof}
    Using the standard optimization technique, the corresponding Lagrangian functional is obtained as follows
    \begin{equation}\label{dr:LagF}\nonumber
        \begin{split}
        F=&\int_0^T \ln C^P(\tau) \mathrm{d}\tau -\lambda\left[\int_0^T P(\tau) \mathrm{d}\tau-\bar{P}T\right]\\
        =&\int_0^T \left[ \ln C^P(\tau) -\lambda P(\tau) + \lambda \bar{P} \right] \mathrm{d}\tau
        \end{split}
    \end{equation}
    where $\lambda$ is the Lagrangian multiplier for the constraint.

    Solving this optimization problem needs to differentiate $F$ with respect to $P(\tau)$, which is equivalent to take the derivative of the integrand $L(\tau)$ with respect to $P(\tau)$, according to Euler's Formula. Specifically, we have
    \begin{equation}\nonumber
        \begin{split}
            \frac{\partial L(\tau)}{\partial P(\tau)}=\frac{1}{\ln\left( 1+\frac{P(\tau)}{N(\tau)} \right)} \frac{1}{P(\tau)+N(\tau)}-\lambda.
        \end{split}
    \end{equation}

    Let the derivative equal to zero, $P(\tau)$ can be obtained by the following steps of calculation
    \begin{equation}\label{dr:Ptau}\nonumber
        \begin{split}
            &\left[\ln\left(P(\tau)+N(\tau)\right)-\ln N(\tau)\right]\left[P(\tau)+N(\tau)\right]=\frac{1}{\lambda}\\
            &-\ln N(\tau)= \frac{1}{\lambda\left[\ln\left(P(\tau)+N(\tau)\right)\right]}-\ln\left(P(\tau)+N(\tau)\right)\\
            &\frac{1}{\lambda N(\tau)}=\exp\left( \frac{1}{\lambda\left(P(\tau)+N(\tau)\right)} \right)\frac{1}{\lambda\left(P(\tau)+N(\tau)\right)}\\
            &\frac{1}{\lambda\left(P(\tau)+N(\tau)\right)}=\textsf{W}\left[ \frac{1}{\lambda N(\tau)} \right]
        \end{split}
    \end{equation}
    and finally we have
    \begin{equation}\label{rt:Ptau}
        P(\tau)=\frac{1}{\lambda \textsf{W}\left[ \frac{1}{\lambda N(\tau)} \right]}-N(\tau).
    \end{equation}

    In order to solve the undetermined multiplier $\lambda$, we can use the constraint in (\ref{Prob:2}), i.e., $\int_0^T P(\tau)=\bar{P}T$. Then we have
    \begin{equation}\label{dr:P2_const}
        \int_0^T \frac{1}{\lambda \textsf{W}\left[ \frac{1}{\lambda N(\tau)} \right]} \mathrm{d}\tau=\bar{P}T+\int_0^T N(\tau)\mathrm{d}\tau.
    \end{equation}

    In fact, both the power allocation function $P(\tau)$ and the multiplier $\lambda$ are functions of $T$. However, it is reasonable to assume that they don't change much as $T$ changes very slightly to $T+\Delta T$. Thus, we can take the diversities on the both sides of (\ref{dr:P2_const}) with respect to $T$. Then we have
    \begin{equation}\label{eq:lambda_relex}
        \lambda \textsf{W}\left[ \frac{1}{\lambda N(T)} \right]=\frac{1}{\bar{P}+ N(T)}.
    \end{equation}
    Nevertheless, the assumption here still causes some small deviation from the optimal solution. In this sense, we say the result in Theorem \ref{th:solut_app} is an near optimal solution.

    According to the definition of the LambertW function that $\textsf{W}(z)e^{\textsf{W}(z)}=z$, we have
    \begin{equation}\nonumber
        \begin{split}
            \frac{1}{\lambda N(T)}=&\textsf{W}\left[ \frac{1}{\lambda N(T)} \right] e^{\textsf{W}\left[ \frac{1}{\lambda N(T)} \right]}\\
            =&\frac{1}{\lambda[\bar{P}+ N(T)]} e^{\frac{1}{\lambda[\bar{P}+ N(T)]}},
        \end{split}
    \end{equation}
    from which one gets
    \begin{equation}\label{df:lambda_star}
        \frac{1}{\lambda}=[\bar{P}+N(T)]\ln\left(1+\frac{\bar{P}}{N(T)}\right)\triangleq \frac{1}{\lambda_{apx}}
    \end{equation}
    and finally the closed form of $P(\tau)$ in (\ref{rt:th_1}). This completes the proof.

\end{proof}

\subsection{An algorithm finding the $\epsilon$-optimal solution}
It is seen that the result in Theorem \ref{th:solut_app} is an approximation to the optimal power allocation due to the relaxation in solving $\lambda$ in (\ref{eq:lambda_relex}). In this part, an algorithm is proposed to find more accurate $\lambda$ as well as the power allocation function $P(\tau)$, which can approach the optimal solution within any error range.  

Starting from $\lambda_{apx}$ (see Eqn. \ref{df:lambda_star}), we will find the $\epsilon$-optimal $\lambda_{opt}$ by comparing and stepping. It is seen that the optimal power allocation follows (\ref{rt:Ptau}) and satisfies the constraint in (\ref{Prob:2}).  Denote the total power specified by $P(\tau)$ and $\lambda$ as $P_\lambda=\int_0^T P(\tau) \mathrm{d}\tau|_\lambda$, define the difference ratio $r_{\Delta P}=\frac{P_\lambda}{\bar{P}T}-1$ and sign variable $s_{\Delta P}=\mathrm{sgin}(r_{\Delta P})$, which gives the sign of the total power comparison between the current scheme and the optimal one. Fix the step size of $\lambda$, i.e., $\Delta\lambda$, and the maximum allowable $r_{\Delta P}$ as $\varepsilon_{\Delta P}$, the algorithm is given by Algorithm 1.
\begin{algorithm}[h]
\caption{Searching the $\epsilon$-optimal $\lambda$}
\begin{algorithmic}[1]\label{Alg:1}
\REQUIRE ~~\\
    Set $\lambda=\lambda_{apx}$, $\Delta\lambda$, $\varepsilon_{\Delta P}$,\\
    ~~~~~$P_{\lambda_{apx}}=\int_0^T P(\tau) \mathrm{d}\tau|_{\lambda_{apx}}$,\\
    ~~~~~$r_{\Delta P}=\frac{P_{\lambda_{apx}}}{\bar{P}T}-1$;
\ENSURE ~~\\

\WHILE {$|r_{\Delta P}|>\varepsilon_{\Delta P}$}
\STATE $s_{\Delta P}=\mathrm{sgin}(r_{\Delta P})$;
\STATE $\lambda=\frac{1}{\frac{1}{\lambda}-s_{\Delta P}\cdot\Delta\lambda}$;
\STATE $P_\lambda=\int_0^T P(\tau) \mathrm{d}\tau|_{\lambda}$;
\STATE $r_{\Delta P}=\frac{P_{\lambda}}{\bar{P}T}-1$;
      \IF {~$\mathrm{sign}(r_{\Delta P})\cdot s_{\Delta P}>0$}
        \STATE $\Delta\lambda=2\Delta\lambda$
      \ELSE
        \STATE $\Delta\lambda=\Delta\lambda/7$
      \ENDIF
\ENDWHILE \label{code:recentEnd}

\STATE  $\lambda_{opt}$, the $\epsilon$-optimal power allocation:~$P_{\lambda_{opt}}(\tau)$
\end{algorithmic}
\end{algorithm}

As shown, $\lambda$ will be increased in step 3 if $s_{\Delta P}=1$ in step 2 and will be decreased if $s_{\Delta P}=-1$ in step 2. The reason is that $P(\tau)$ is monotonically decreasing with $\lambda$, which can be checked by its derivative,
\begin{equation}\nonumber
    \begin{split}
        &\frac{\partial P(\tau)}{\partial \lambda}=\frac{\partial}{\partial \lambda}\left[\frac{1}{\lambda \textsf{W}\left[ \frac{1}{\lambda N(\tau)} \right]}-N(\tau)\right]\\
        =&\frac{-1}{\lambda^2 \textsf{W}\left[ \frac{1}{\lambda N(\tau)} \right]}+ \frac{1}{\lambda^2\textsf{W}\left[ \frac{1}{\lambda N(\tau)} \right]\left(1+\textsf{W}\left[ \frac{1}{\lambda N(\tau)} \right]\right)}\\
        =&\frac{-1}{\lambda^2 \left(1+\textsf{W}\left[ \frac{1}{\lambda N(\tau)} \right]\right)}<0.
    \end{split}
\end{equation}

From (\ref{df:lambda_star}), it is seen that the value of $\frac{1}{\lambda}$ is usually larger than itself. Thus, we are actually changing the value of $\frac{1}{\lambda}$ in the third step of the loop, which makes the program more smooth. The step size of $\lambda$ will also be changed adaptively (step 6 to 10). Specifically, if $r_{\Delta P}>0$ in two adjacent loops, we will increase the step size so that the total power can approach $\bar{P}T$ more quickly. Otherwise, if $r_{\Delta P}<0$ in two adjacent loops, we know that the total power has just crossed $\bar{P}T$. Thus, the step size is reduced so that it can get closer to the optimal solution. Besides, the step size is reduced more quickly (divided by 7 as in the algorithm) than when it is increased (doubled). Furthermore, the factor of the decrease and the increase are chosen to be relatively prime numbers. This insures that the algorithm will converge more quickly, while avoiding endless loops.

\section{Numerical Results}\label{sec:4}
In this section, we provide some numerical results to illustrate the proposed power allocation scheme. The system bandwidth is $W=5$ MHz and average transmitting power is $\bar{P}=5$ dBW. Suppose that the minimum distance $d_0=100$ m, the cellular radius is $R_0=2.5$ km and train velocity is $v=300$ km/h, then we have $T=100$ s. Let the pathloss exponent be $\alpha=4$. For Algorithm \ref{Alg:1}, the initial step size  is  $\Delta\lambda=0.01$ and the maximum allowable $r_{\Delta P}$ is $\varepsilon_{\Delta P}=0.001$.
\begin{figure}[!t]
\centering
\includegraphics[width=3.2in]{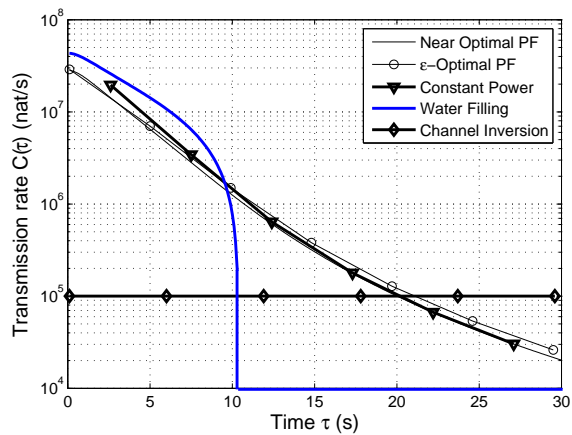}
\caption{Transmission rate ($C_\tau$) under different power allocations, $P=5$ dBW, $R_0=2.5$ km ($T=30$s)}
\label{fig:Ctau30}
\end{figure}

\begin{figure}[!t]
\centering
\includegraphics[width=3.2in]{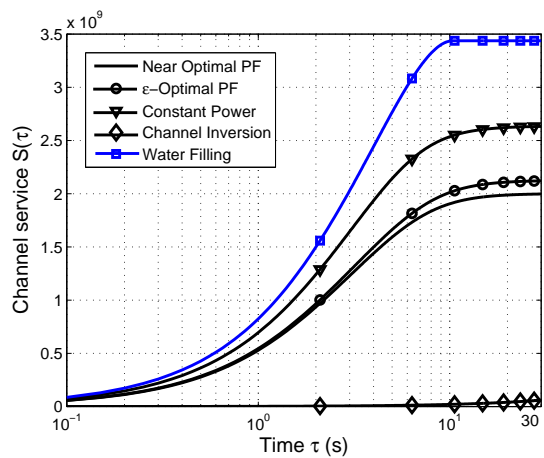}
\caption{The channel service versus time $\tau$, $P=5$ dBW, $R_0=2.5$ km ($T=30$s)}
\label{fig:Stau30}
\end{figure}

The power allocation (PA) functions along time via water-filling, channel inversion, as well as our proposed proportional fair (PF) scheme are presented. in Fig. \ref{fig:Ctau30}. Firstly, it is seen that the channel inversion PA (Eqn. \ref{eq:Ctau_inv}) get the best fairness along time by providing a stable data rate. On the other hand, the water-filling PA (Eqn. \ref{eq:Ctau_wf}) destroys the fairness completely by transmit nothing when $\tau>10.4$s. However, for $\tau<10$s , the transmission rate of the water-filling PA is much larger than the channel inverse PA. Nevertheless, a good  trade-off between efficiency and fairness is achieved by the proposed PF PA along the time, where a basic transmission is guaranteed anywhere within the cell. Particularly, the $\epsilon$-optimal PF PA performs better than the near optimal PF PA, especially at the cell edge regions. It is also seen that the constant PA (Eqn. \ref{eq:Ctau_cons}) is not so different from the proposed PF PA. Actually, the difference between the two will be further reduced when the cell radius gets larger. In this sense, the constant power PA is also a good PA scheme in terms of efficiency and fairness.

This is also seen in Fig. \ref{fig:Stau30}, where the channel service process is presented (the x-axis is in log-scale). In this figure, the comparison between the PA schemes is more clear, where the transmission rate can be checked by the gradient of corresponding curves. Thus, the PF PA is good for its satisfying capability and smooth change of transmission rate.
\begin{figure}[!t]
\centering
\includegraphics[width=3.2in]{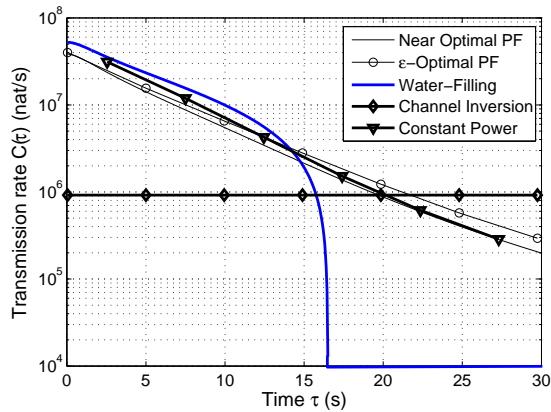}
\caption{Transmission rate ($C_\tau$) under different power allocations, $P=15$ dBW, $R_0=2.5$ km ($T=30$s)}
\label{fig:Ctau30_15}
\end{figure}
\begin{figure}[!t]
\centering
\includegraphics[width=3.2in]{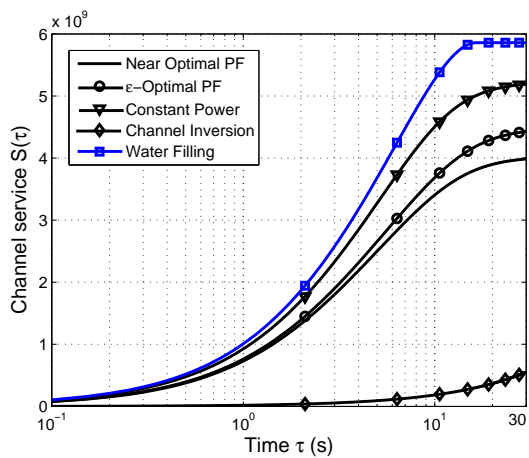}
\caption{The channel service versus time $\tau$, $P=15$ dBW, $R_0=2.5$ km ($T=30$s)}
\label{fig:Stau30_15}
\end{figure}

Fig. \ref{fig:Ctau30_15} and Fig. \ref{fig:Stau30_15} reconsidered the instantaneous capacity and channel service process as functions of time, where transmit power is increased to 15 dBW. Overall, previous conclusions still holds. In addition, it is seen that the performance of channel inversion and water-filling PA get improved. Specifically, the channel inversion PA achieves higher data rate and is not so bad as lower transmit power case (when $P=5$ dBW as in Fig. \ref{fig:Ctau30}) compared with other PA schemes. The water-filling also performs better by having a larger breakdown time. Finally, the proposed PF PA scheme also achieves better fairness compared with the constant PA. In short, higher transmit power provides more freedom to PA schemes. Therefore, Every PA scheme performs better if transmit power is increased, except the constant power PA, since there exists no power allocation.

Finally, the the convergence of Algorithm \ref{Alg:1} is given in Fig. \ref{fig:iter}. As shown, the algorithm is convergent and efficient.
\begin{figure}[!t]
\centering
\includegraphics[width=3.2in]{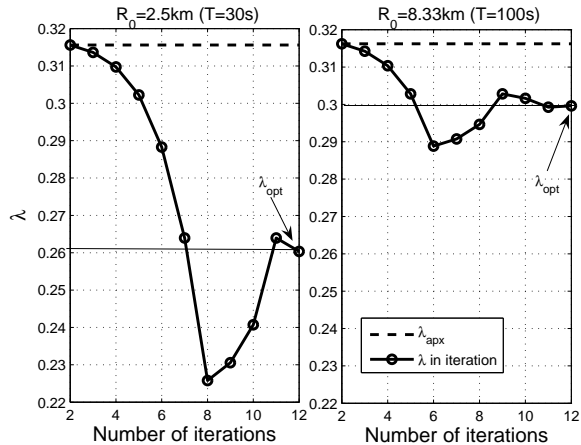}
\caption{The convergence of $\lambda$ in the iteration by Algorithm 1} \label{fig:iter}
\end{figure}

\section{Conclusion}\label{sec:5}
Different from traditional mobile communication (especially the cellular communication), very high mobility and very large communication distance become the main challenges in the high speed railway communications. These features make it indispensable to implement power allocation along the time. Based on the concept of the channel service, we investigated in this paper the trade-off between channel service and fairness along time, by the proposed proportional fairness along time for HSR systems. Both near optimal PF allocation scheme and algorithm finding the $\epsilon$-optimal PF allocation are given, which were shown to be efficient via numerical results. It is also seen that the constant power allocation is a good scheme in terms of efficiency and fairness, especially for large cell and low transmit power cases.

\appendix
\textit{Proof of Lemma \ref{lem:2_pfhsr}:}
The proof follows a similar way of the proof of the equivalence between the PF criteria (\ref{df:pf_org}) and the maximization problem (\ref{prob:1}) for the multi user case \cite{Kim-2005}. Under some power allocation scheme $P(\tau)$ and $S(\tau)$, denote $x_\tau=C^P(\tau)$ and $y_\tau=C^S(\tau)$ for short, respectively. Then lemma \ref{lem:2_pfhsr} will be proved if the solution to $\underset{P(\tau)}{\max} \int_0^T \ln x_\tau \mathrm{d}\tau$ satisfies $\int_0^T\frac{y_\tau - x_\tau}{x_\tau} \mathrm{d}\tau\leq0$ for any $y_\tau$, and conversely, a proportional fairness $x_\tau$ must be the solution to the optimization problem.

It is seen that the functional $\int_0^T \ln x_\tau \mathrm{d}\tau$ is strictly concave since the functional $\ln x_\tau$ is strictly concave. Therefore, the functional has its unique maximization point.

Suppose $x_\tau$ is the maximization point of the functional $\int_0^T \ln x_\tau \mathrm{d}\tau$ and $y_\tau=x_\tau+\Delta x_\tau$ deviates very slightly from $x_\tau$. Then we have
\begin{equation}\nonumber
    \begin{split}
        0\geq&\int_0^T \ln y_\tau \mathrm{d}\tau-\int_0^T \ln x_\tau \mathrm{d}\tau\\
        =&\int_0^T \left(\ln x_\tau\right)'(y_\tau-x_\tau) \mathrm{d}\tau\\
        =&\int_0^T \frac{y_\tau-x_\tau}{x_\tau} \mathrm{d}\tau,
    \end{split}
\end{equation}
i.e., $x_\tau$ is proportional fair.

Conversely, suppose $x_\tau$ is proportional fair, $\int_0^T \left(\ln x_\tau\right)'$ $(y_\tau-x_\tau) \mathrm{d}\tau=\int_0^T \frac{y_\tau-x_\tau}{x_\tau} \mathrm{d}\tau\leq0$ will hold for any other $y_\tau$. Since functional $\int_0^T \ln x_\tau \mathrm{d}\tau$ is strictly concave, we have $\int_0^T \left(\ln x_\tau\right)''\Delta x_\tau^2 \mathrm{d}\tau\leq0$. Recall that the functional $\ln y_\tau$ is also derivable for all orders and can be expanded in Taylor's series as $\ln y_\tau=\ln x_\tau + \sum_{n=1}^\infty \frac{(\ln x_\tau)^{(n)}}{n!} (y_\tau-x_\tau)^n$.  For some $\Delta x_\tau\rightarrow0$, we have
\begin{equation}\nonumber
    \begin{split}
        &\int_0^T \ln y_\tau \mathrm{d}\tau\\
        =&\int_0^T \left[ \ln x_\tau+\left(\ln x_\tau\right)'\Delta x_\tau+\frac12\left(\ln x_\tau\right)''\Delta x_\tau^2 + o(x_\tau^2) \right] \mathrm{d}\tau\\
        \leq&\int_0^T \left[ \ln x_\tau+ \frac{y_\tau-x_\tau}{x_\tau} \right] \mathrm{d}\tau
    \end{split}
\end{equation}
which leads to
\begin{equation}\nonumber
    \int_0^T \left[\ln y_\tau - \ln x_\tau\right] \mathrm{d}\tau\leq \int_0^T \left[  \frac{y_\tau-x_\tau}{x_\tau} \right] \mathrm{d}\tau \leq0.
\end{equation}
Therefore, $x_\tau$ is the unique maximization point. This completes the proof of lemma \ref{lem:2_pfhsr}.

\section*{Acknowledgement}
Prof. P. Fan's work was partly supported by the China Major
State Basic Research Development Program (973 Program) No.
2012CB316100(2), National Natural Science Foundation of China
(NSFC) No. 61171064  and NSFC No. 61021001.
Prof. K. B. Letaief's work was partly supported by RGC under grant No. 610311.

\bibliographystyle{IEEEtran}

\end{document}